\newtheorem{lem}{Lemma}
\begin{document}

\begin{frontmatter}

\title{ Mechanisms of Multi-strain Coexistence in Host-Phage Systems with Nested Infection Networks}

\author[sof]{Luis F. Jover}
\ead{ljover3@gatech.edu}

\author[sob,som]{Michael H. Cortez}
\ead{michael.cortez@biology.gatech.edu}

\author[sof,sob]{Joshua S. Weitz\corref{cor1}}
\ead{jsweitz@gatech.edu}

\address[sof]{School of Physics, Georgia Institute of Technology, Atlanta, GA, USA.}
\address[sob]{School of Biology, Georgia Institute of Technology, Atlanta, GA, USA.}
\address[som]{School of Mathematics, Georgia Institute of Technology, Atlanta, GA, USA.}

\cortext[cor1]{Corresponding Author}

\begin{abstract}
Bacteria and their viruses (``bacteriophages") coexist in natural environments forming complex infection networks. Recent empirical findings suggest that phage-bacteria infection networks often possess a nested structure such that there is a hierarchical relationship among who can infect whom.  Here we consider how nested infection networks may affect phage and bacteria dynamics using a multi-type Lotka-Volterra framework with cross-infection.  Analysis of similar models have, in the past, assumed simpler interaction structures as a first step towards tractability.    We solve the proposed model, finding trade-off conditions on the life-history traits of both bacteria and viruses that allow coexistence in communities with nested infection networks.  First, we find that bacterial growth rate should decrease with increasing defense against infection.  Second, we find that the efficiency of viral infection should decrease with host range.   Next, we establish a relationship between relative densities and the curvature of life history trade-offs.  We compare and contrast the current findings to the ``Kill-the-Winner'' model of multi-species phage-bacteria communities. Finally, we discuss a suite of testable hypotheses stemming from the current model concerning relationships
between infection range, life history traits and coexistence in complex phage-bacteria communities.
\end{abstract}

\end{frontmatter}

\section{Introduction}
Bacteria and their viral parasites, i.e., phages, are found in natural environments from oceans, soils to the human gut.  There are an estimated $10^{30}$ bacteria on Earth, with estimates of phages approximately 10-fold higher \cite{Suttle2005,Suttle2007}. Phages are not only abundant,
but they are also key players in
ecosystems.  For example, phages can be responsible for a significant portion of microbial mortality, e.g., with estimates ranging from 20\%-80\% \cite{Suttle1994,Weinbauer2004,Corinaldesi2010}.  These estimates of lysis are at the community scale.  However, individual phages infect a subset of bacteria in a community.  A growing number of empirical studies have begun to investigate the nature of cross-infections between phages and bacteria  \cite{Wichels1998,Holmfeldt2007,Gomez2011,Poisot2011}.   These studies have the potential to help identify the basis for phage-induced mortality, by delineating the specific phage types capable of infecting and lysing specific host types and, potentially, the taxonomic and biogeographic drivers of cross-infection \cite{Weitz2013}.  Although predictive models of cross-infection remain elusive, it is evident that a single virus can infect multiple strains of a host \cite{Poullain2008,Stenholm2008}, in some cases multiple species \cite{Wichels1998,Holmfeldt2007}, and even hosts from different genera \cite{Sullivan2003}.

In an effort to identify cross-system trends and patterns in cross-infection, we (and collaborators) recently re-analyzed 38 phage-bacteria infection studies. This re-analysis identified a recurring pattern in these studies: viral strains have overlapping host ranges such that the overall network of infections is significantly nested  \cite{Flores2011}. In nested systems, there is a hierarchy for who can infect whom (see Figure 1A).  In a perfectly nested system, the specialist virus can infect the most permissive host, the next most specialized virus infects the most permissive host and the second most permissive host, and so on (see Figure 1B).  Hence, the host that is most difficult to infect is infected only by the most generalist virus. In systems that are nested but not perfectly so, this nesting of infection ranges occurs more frequently than expected by chance.  It is important to note that some of the nested phage-host infection networks re-analyzed in \cite{Flores2011} are derived from experimental evolution studies in which the hosts and phages do not coexist at the same time point, but rather the nested relationship is only observed when performing cross-infection experiments between time points (e.g. \cite{Poullain2008}).  However, some networks are derived from ecological studies where samples are taken from the environment (e.g., \cite{Wichels1998,Holmfeldt2007,Stenholm2008}).  In such cases the finding of overlapping infection ranges poses a dilemma for understanding coexistence. In particular,  how can a specialist virus coexist when it only infects a single host, and indeed the host that is most susceptible to infection?  Further, how can a permissive host coexist with other more resistant hosts when there is a nested relationship to infectivity?  The consequences of such interaction networks on the structure and dynamics of microbes and their viral parasites have not yet been established.  Here, we integrate empirical observations of complex infection networks into ecological models of host-phage dynamics.

Host-phage systems and their population dynamics have been studied mathematically for over 30 years \cite{Levin1977,Abedon2008}. The earliest models were meant to facilitate understanding of relatively low-diversity chemostat experiments, often involving the cross-infection of a single phage type with a single host type.  These studies often found that mutants could arise and so simple models were often extended to include two hosts and one phage, two phages and one host, and so on  \cite{Levin1977,Lenski1985}. However, host-phage models applied to natural environments require a greater diversity of bacteria and phage types.  There are different types of approaches to integrate diversity into dynamic models.  On the one hand, eco-evolutionary models have been developed that assume that new host and viral types can evolve as a result of ecological interactions.  Despite their computational complexity, multiple examples of such eco-evolutionary approaches for phage-host dynamics are now available \cite{Weitz2005,Forde2008,Weinberger2012,Childs2012}.  Alternatively, ecological models in which a fixed diversity of types are included from the outset can be used to study how certain features in a community may help maintain diversity.  Examples of such approaches include spatial multi-strain models \cite{Haerter2012}, models of competition between hosts possessing different types of immune systems \cite{Levin2010} and models that incorporate higher trophic levels \cite{Thingstad2000}.

The most prominent theory of phage-bacteria ecological dynamics in multi-species communities is of the latter type and is known as the Kill-the-Winner model \cite{Thingstad2000,Winter2010}.  In this model, multiple species of bacteria and phage are considered, for which each virus is assumed to exclusively infect a single host type.  The model's central conclusion is that each host type is controlled, in a top-down fashion, by a single viral type, with the exception of a single host type whose density is controlled by the total host biomass limit set by an additional generalist grazer.  Moreover, the steady state densities of viral  types are determined by relative differences in life history traits of the hosts (an issue we return to in the discussion).    The model is meant to describe the dynamics of bacterial species, i.e., no strain level dynamics are considered, although its structure is general.

Here we extend the basic framework of the Kill-the-Winner (KTW) model to incorporate complex interaction networks and ask how nestedness mediates coexistence in multi-strain host-phage systems. We focus our attention on the idealized case of a perfectly nested interaction network, and show  conditions on the life-history parameters necessary for coexistence. We find trade-off conditions necessary for coexistence and show that coexistence occurs even when the system is perturbed from the equilibrium. We examine the abundances of both hosts and viruses at steady state and their relationship to life history traits and to infection range. Finally we examine the existence of coexistence equilibria in the general case where the infection matrix is not perfectly nested.  We close by discussing the relevance of the current study to the KTW model, empirical efforts to link infection patterns with life history traits, and recent attempts to establish a link between network structure and biodiversity.

\section{Methods}

We model a system of $n$ bacterial strains (hosts) and $n$ viral strains (phages). Bacteria compete for implicit resources while viruses infect different subsets of the bacterial community. We denote the density of host $i$ by $H_i$ and the density of virus $j$ by $V_j$. Our model of the ecological dynamics of the different host and viral strains is,
\begin{equation}
\label{odeSystem}
\begin{aligned}
\frac{dH_i}{dt}&=  r_iH_i\left(1 - \frac{\sum\limits_{j=1}^nH_j}{K }\right) - \sum_{j=1}^nM_{ij}\phi_{j}H_iV_j,\\
\frac{dV_j}{dt}&= \sum_{i=1}^nM_{ij}\phi_{j}\beta_{j}H_iV_j - m_jV_j.
\end{aligned}
\end{equation}
In the absence of viruses, hosts exhibit logistic growth with exponential growth rate $r_i$ and a community-wide carrying capacity $K$. To simplify the model we assume that for a given host, intrastrain and interstrain competition are the same.
 The parameters $\phi_{j}$ and $\beta_{j}$ are the adsorption rate and burst size (virion release per infection)   of virus $j$ (which we assume is independent of host $i$). Virus $j$ decays outside the host at a rate $m_j$. $M$ denotes the infection matrix, where $M_{ij} = 1$ if virus $j$ can infect host $i$ and $M_{ij} = 0$ if virus $j$ cannot infect host i. For analytical tractability, there are several characteristics of the life cycle of a viral infections that are not included in this model. For example, we assume there is no delay between infection and virion release. We also do not include  the possibility of  lysogeny, where viral genetic material is  incorporated into the host chromosome and vertically transmitted to daughter cells for future activation and lysis.

\begin{figure}
\centering
\subfigure{\label{infnet_stenholm}\includegraphics{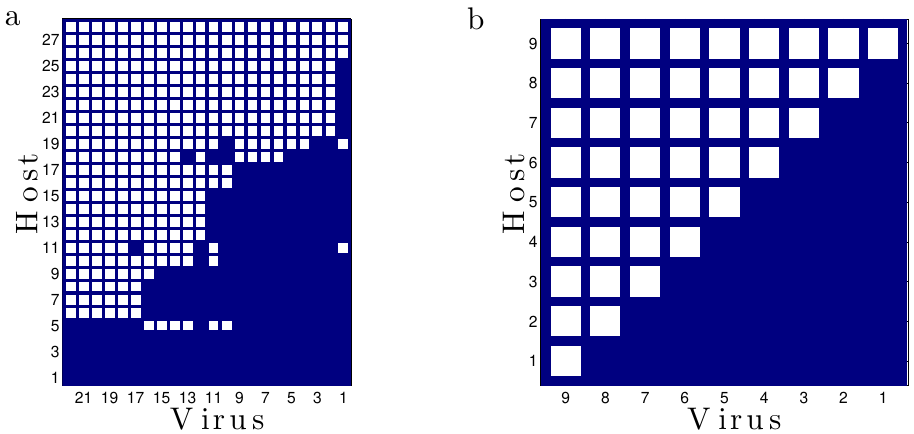}}
\caption{(a) Infection network from an experimental study presenting a statistically nested pattern (original data from \cite{Stenholm2008} reanalyzed in \cite{Flores2011}). The numbers identify different types of viruses and hosts. (b)  Perfectly nested infection network. For a perfectly nested network, the numbers correspond to the rank (i.e number of interactions). White squares denote that a given virus can infect the host.}
\label{infnets}
\end{figure}

Empirically measured phage-host infection networks are often statistically nested (for example, see Figure \ref{infnets}a). Here, our main goal is to examine the ecological implications of nested interaction networks for host-phage dynamics. We start by  studying the idealized case of a  system where the infection network is perfectly nested; see Figure \ref{infnets}b. In this  representation of the network, viruses are numbered from the most specialist to the most generalist. We will refer to this number as the rank of a species.  We see that virus 1 infects only one strain of host whereas virus $j$ infects $j$ different  host strains. Note that, in the nested case, virus $j+1$ can infect all strains of  hosts that virus $j$ can infect plus an additional one. Similarly, hosts are ordered by the number of viral strains that can infect them. Host $i$ can be infected by $i$ different  viral strains. Again, note that in the perfectly nested case, host $i$ can be infected by all viruses that can infect host $i+1$.
We relax these simplifying assumptions in section \ref{deviation},  where we consider the case of interaction networks that are not perfectly nested (as is the case in Figure \ref{infnets}a).

Numerical simulations of the ecological dynamics were done using  a Runge-–Kutta method (ode45 in MATLAB \cite{MATLAB2010}). The parameters (life-history traits) used in all the simulations are shown in table \ref{table:para} in  \ref{ap:para}. These parameters were chosen from a baseline of biologically realistic values found in the literature \cite{Wommack2000,Abedon2007}.

\section{Results}

\subsection{Equilibrium densities and conditions for coexistence}

We start by examining the equilibrium densities of the host and phage strains.
When the $2n$ equations of  system \eqref{odeSystem} are simultaneously zero, we find the equilibrium densities corresponding to strain coexistence:

\begin{equation}
H_n^*=h_1,\quad H_{n-1}^* = h_2-h_1,\quad \hdots,\quad H_1^*= h_{n}-h_{n-1}.
\label{hSteady}
\end{equation}

\begin{equation}
V_n^* = \frac{r_1}{\phi_n}\left(1 - \frac{h_n}{K }\right),\quad
V_{n-1}^* = \frac{(r_2 -r_1)}{\phi_{n-1}}\left(1 - \frac{h_n}{K }\right),\quad
\hdots\quad
V_{1}^* = \frac{(r_n -r_{n-1})}{\phi_1}\left(1 - \frac{h_n}{K }\right).
\label{vSteady}
\end{equation}

\noindent where the superscript $^*$ denotes equilibrium densities and  $h_j=\frac{m_j}{\phi_j\beta_j}$.
The equilibrium densities  of the hosts are expressed exclusively in terms of the life-history traits of the viruses ($h_j$). This is  due to the top-down control of  the hosts by the viruses. In Figure \ref{calc_hsteady} we show  graphically how to construct the host steady states. We see that  virus 1, which exclusively infects host $n$, determines the equilibrium density of that host. The value of this density is the ratio $h_1=\frac{m_1}{\phi_1\beta_1}$. This result is the same as in a system with just one host and one virus, and it can be interpreted as the host density necessary to support the virus infecting it (Figure \ref{calc_hsteady}a). Virus 2, on the other hand, requires a host density of $h_2=\frac{m_2}{\beta_2\phi_2}$ to persist in the system. This virus infects two hosts: one in common with virus 1 (host $n$), and an additional one (host $n-1$). However, the density of host $H_n$ is already set by virus 1, so the density of host $n-1$ is the difference between the required density for virus 2 ($h_2$)  and the density of virus 1 ($h_1$) (Figure \ref{calc_hsteady}b). Similarly, virus 3 infects one more type of host than virus 2, and thus, the density of the additional  host ($H_{n-2}$) is the difference between the density required by virus 3 to survive ($h_3$) and the density  set by virus 2 ($h_2$); see Figure \ref{calc_hsteady}c. The densities of the remaining hosts are determined in an analogous way, Eq \eqref{hSteady}. From this result we obtain that the total host biomass at steady state  is $\sum\limits_{j=1}^nH_j^* =\frac{m_n}{\beta_n\phi_n}$, i.e. it is set by the most generalist virus.

\begin{figure}[h]
\centering
\includegraphics{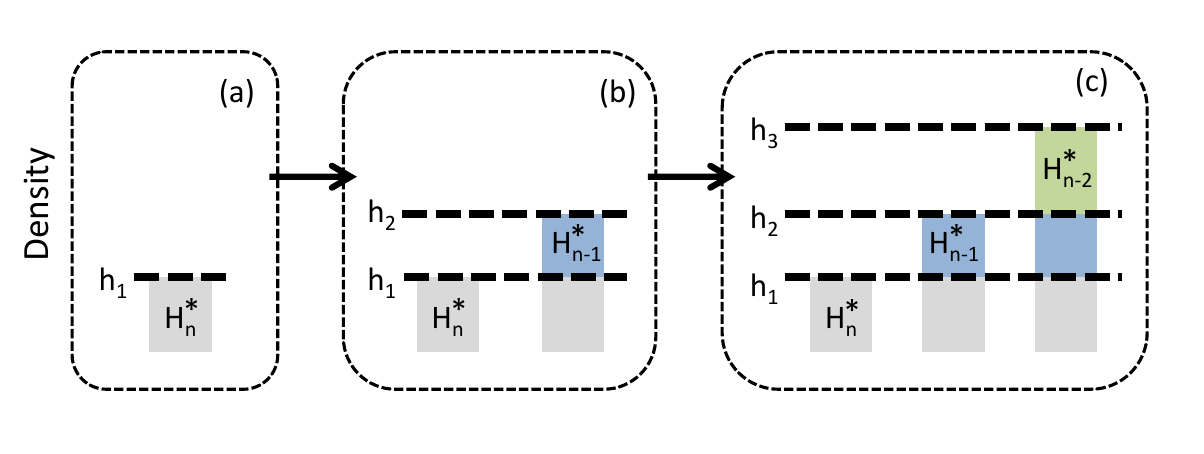}
\caption{Construction of the host steady states ($H_i^*$). (a) The equilibrium density of host $n$, $H^*_n$, is set by the aggregate life-history traits of virus 1, $h_1$, which infects $H_n$ exclusively. (b) The equilibrium density of host $n-1$, $H_{n-1}^*$, is set by the difference between $h_2$ and $h_1$. (c) The equilibrium density of host $n-2$, $H_{n-2}^*$, is set by the difference between $h_3$ and $h_2$.}
\label{calc_hsteady}
\end{figure}

Coexistence of all strains at positive densities requires three conditions to be met:

\begin{align}
&\quad\frac{m_{n}}{\beta_{n}\phi_{n}} > \frac{m_{n-1}}{\beta_{n-1}\phi_{n-1}}> \dots > \frac{m_2}{\beta_2\phi_2}  >\frac{m_1}{\beta_1\phi_1}\label{vTradeoff}\\[5pt]
&\quad r_n>r_{n-1}>\dots>r_2>r_1\label{hTradeoff}\\[5pt]
&\quad \sum\limits_{j=1}^nH_j^* = \frac{m_n}{\beta_n\phi_n}<K\label{Kcondition}.
\end{align}
These conditions establish a connection between the life-history traits of the different viral and bacterial strains and the  structure of the infection networks, specifically the rank of the strains.
Equation \eqref{vTradeoff} represents a tradeoff for the viruses and can also be written  as  $h_n>h_{n-1}>\dots>h_2>h_1$.  Larger values of $h_j$ imply a virus has a higher de-activation rate (higher $m_j$) and/or  produces fewer viruses per infection and/or is worse at attaching to the host (smaller $\beta_j\phi_j$).  Thus, the inequalities in equation \eqref{vTradeoff}  provide the  viral trade-offs necessary for coexistence: a virus with a broader host range has less advantageous life-history traits, as characterized by  its ratio $h_j$, compared to  viruses with a narrower host range. Note that if two viral types infected a single host, the type with the lower ratio $h_j$ would out-compete the other. Here, coexistence is possible because the viral type with higher $h_j$ can infect more host strains than the viral types with lower $h_j$. Equation \eqref{hTradeoff} describes the trade-off between immunity and growth rate for the hosts. Specifically, coexistence is possible if a host that can be infected by more viral types has a higher growth rate compared to a host that can be infected by fewer viral types. Finally,  equation \eqref{Kcondition} specifies that the sum of the host densities at equilibrium needs to be less than the carrying capacity of the system.

Figure \ref{2h2vStable} shows an example of the dynamics resulting from a system with 2 host and 2 viral strains which satisfies the conditions listed in equations \eqref{vTradeoff}, \eqref{hTradeoff}, and \eqref{Kcondition}. In contrast, Figure \ref{2h2vDie} shows what happens when the conditions for coexistence are not satisfied. In this example, host 1 has a larger growth rate than host 2, which results in the extinction of host 2, and as a consequence, the extinction of virus 1.

\subsection{Community dynamics and invasion }

Conditions \eqref{vTradeoff}, \eqref{hTradeoff}, and \eqref{Kcondition} guarantee the existence of a coexistence equilibrium in system \eqref{odeSystem}.  In this section we address two issues related to the coexistence between host and viral strains when those conditions are satisfied.  First, via numerical simulations we investigate if the species densities tend to the coexistence equilibrium or if cyclic coexistence is also possible.  Second, boundary equilibria where one or more of the host and viral strains are extinct also exist in system \eqref{odeSystem}.  We ask if those boundary equilibria are unstable with respect to invasion by the extinct host and viral strains when conditions \eqref{vTradeoff}, \eqref{hTradeoff}, and \eqref{Kcondition} are satisfied.

As seen in Figures \ref{2h2vStable}, \ref{2h2vOsc}, and \ref{5h5v}, when conditions \eqref{vTradeoff}, \eqref{hTradeoff}, and \eqref{Kcondition} are satisfied, the host and viral densities can either tend to steady state or exhibit cyclic oscillations.   In Figure \ref{2h2vStable}, the coexistence equilibrium point is stable, and the values of the densities tend to the equilibrium values after transient oscillations.  For a different set of life-history traits, the coexistence equilibrium point is no longer stable, but cyclic coexistence is still possible (Figure \ref{2h2vOsc}).  Figure \ref{5h5v} is an example of cyclic coexistence for 5 species of bacteria and 5 species of virus that satisfy the trade-off conditions \eqref{vTradeoff}, \eqref{hTradeoff}, and \eqref{Kcondition}.  In Figure \ref{5h5v}, coexistence is still possible because the coexistence equilibrium exists, even though the time series may be irregular.  Note that due to the dimension of the model we do not have a closed form solution for when cyclic dynamics arise in our system.

In appendix A we show that conditions \eqref{vTradeoff}, \eqref{hTradeoff}, and \eqref{Kcondition} imply that all boundary equilibrium points of system \eqref{odeSystem} are unstable with respect to invasion by at least one host or viral strain that is absent from that subsystem.  This implies that if the dynamics in that subsystem tend to the boundary equilibrium point, then that subsystem can be invaded by one or more of the extinct host or viral strains.  A stronger conclusion can be reached if system \eqref{odeSystem} is permanent (i.e. densities are bounded above and, after some time, are bounded below by a finite value \cite{Hofbauer1998}).  Because the average long term invasability conditions along orbits in permanent Lotka-Volterra systems are equal to the invasability conditions at equilibrium points \cite{Hofbauer1998}, conditions \eqref{vTradeoff}, \eqref{hTradeoff}, and \eqref{Kcondition} imply that the extinct strains can always invade the subsystem when system \eqref{odeSystem} is permanent.  It is an open question whether system \eqref{odeSystem} is permanent when conditions \eqref{vTradeoff}, \eqref{hTradeoff}, and \eqref{Kcondition} are satisfied.

\begin{figure}
\centering
\includegraphics{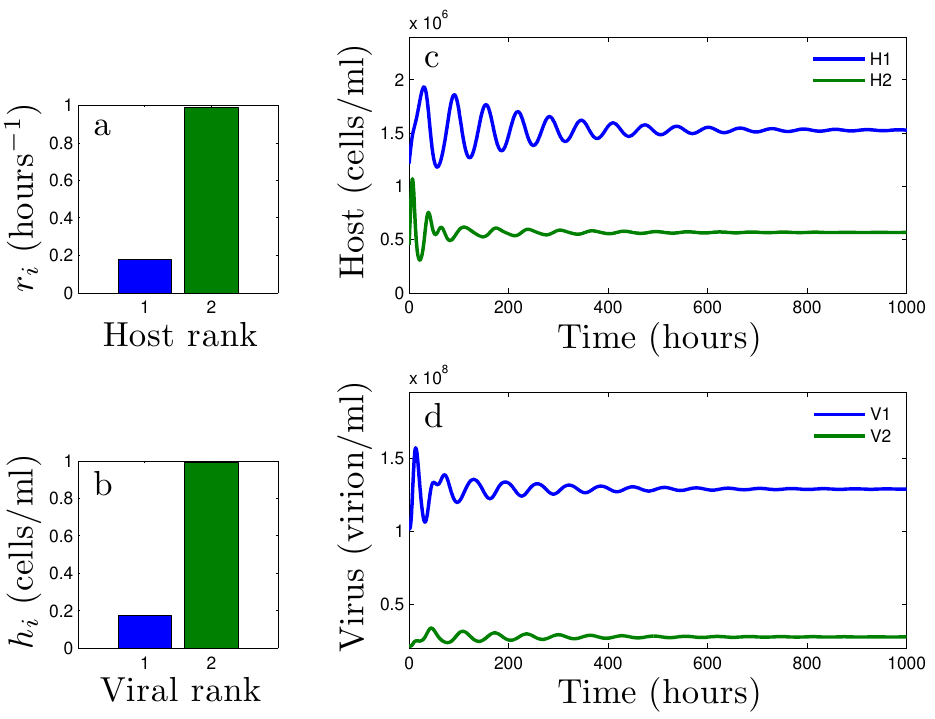}
\caption{Dynamics of 2 host types and 2 viral types when the infection network is perfectly nested and the  life-history traits satisfy the conditions \eqref{vTradeoff}, \eqref{hTradeoff}, and \eqref{Kcondition} for coexistence. (a) Growth rate, $r_i$, as a function of host rank in accordance with condition \eqref{hTradeoff}. (b) $h_i$ as a function of  viral rank in accordance with condition \eqref{vTradeoff}. (c) Host densities as a function of time: both hosts coexist via a stable equilibrium. (d) Viral densities as a function of time: both viruses coexist via a stable equilibrium.}
\label{2h2vStable}
\end{figure}

\begin{figure}
\centering
\includegraphics{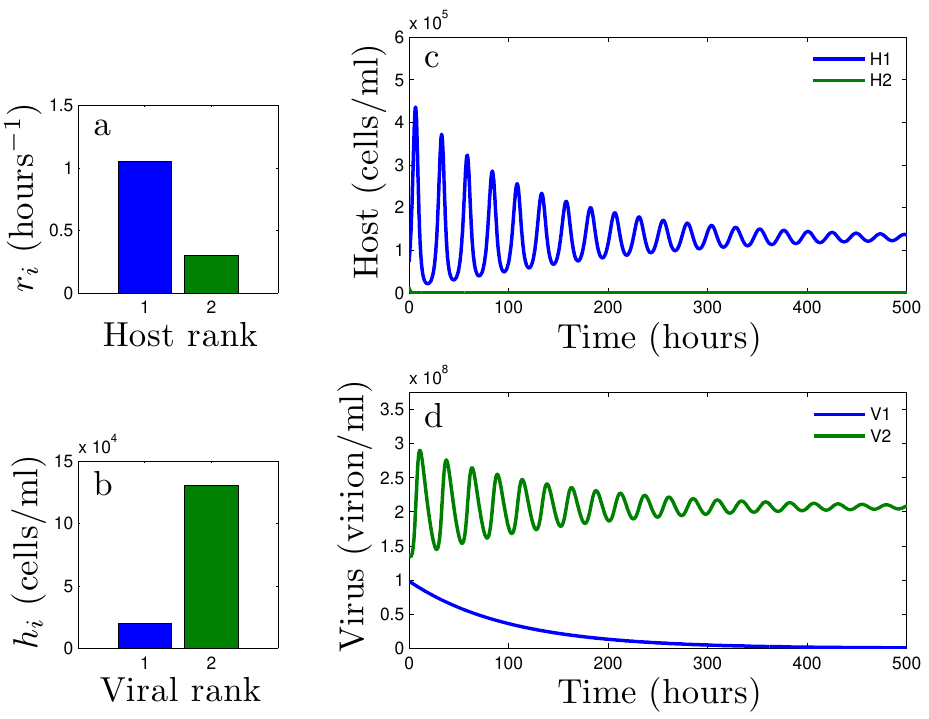}
\caption{Dynamics of 2 host types and 2 viral types when the infection network is perfectly nested and the trade-off conditions are not met.  One of the viral types does not survive because the life-history traits do not satisfy the conditions for coexistence. (a) Growth rate, $r$, as a function of host rank does not satisfy condition \eqref{hTradeoff}. (b) $h$ as a function of rank. (c) Host density as a function of time: host 2 goes extinct. (d) Viral densities as a function of time: virus 1 goes  extinct.}
\label{2h2vDie}
\end{figure}

\begin{figure}
\centering
\includegraphics{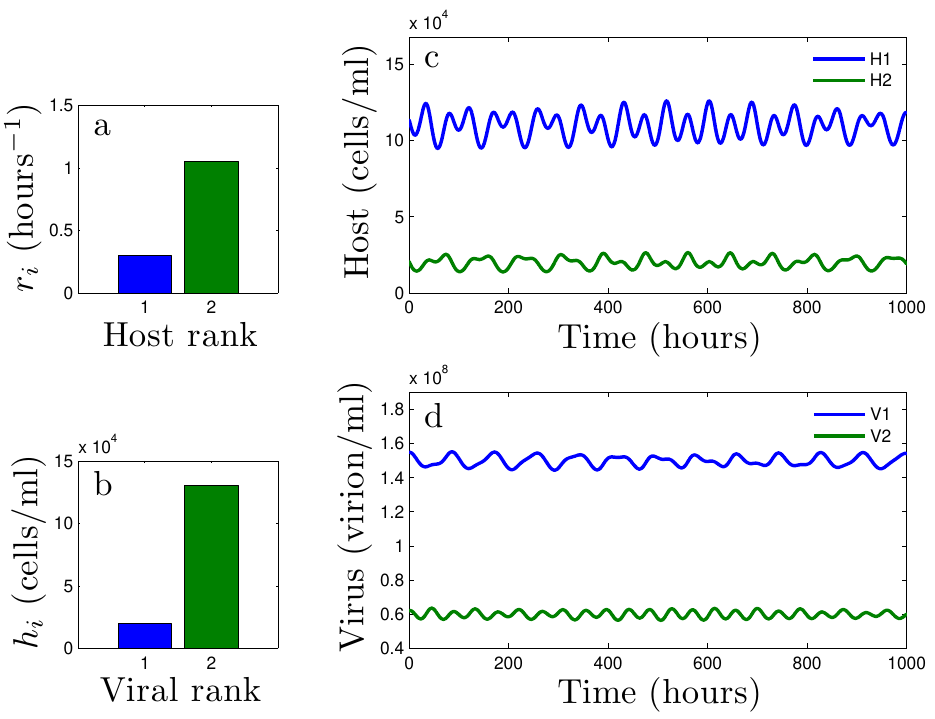}
\caption{Cyclic coexistence of 2 host types and 2 viral types.  The infection network is perfectly nested and the  life-history traits satisfy the conditions for coexistence. (a) Growth rate, $r_i$, as a function of host rank in accordance with condition  \eqref{hTradeoff}. (b) $h_i$ as a function of  viral rank in accordance with condition  \eqref{vTradeoff}. (c) Host densities as a function of time: both hosts coexist via a stable equilibrium. (d) Viral densities as a function of time: both viruses are present via oscillations.}
\label{2h2vOsc}
\end{figure}

\begin{figure}
\centering
\includegraphics{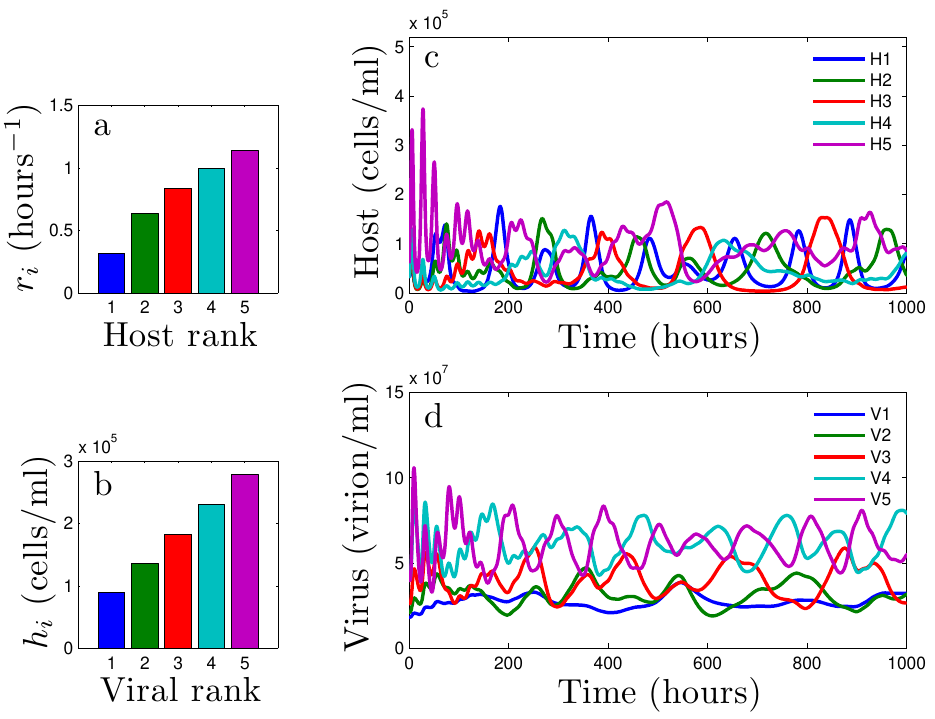}
\caption{Cyclic coexistence of 5 types of  hosts and 5 types of  viruses. The infection network is perfectly nested and the  life-history traits satisfy the conditions for coexistence. (a) Growth rate, $r$, as a function of host rank in accordance with condition \eqref{hTradeoff}. (b) $h$ as a function of  viral rank in accordance with condition \eqref{vTradeoff}, (c) host densities as a function of time: all of the  hosts coexist, (d) viral densities as a function of time: all of the viruses coexist.}
\label{5h5v}
\end{figure}

\subsection{Relationship between abundance and rank}

Another relevant question is the connection between  rank and density. Is there a way to infer  information about the infection network from measurements of density or vice-versa?
We find that, with the exception of host $n$, host densities are determined by the difference between consecutive $h_i$ values (eq. \eqref{hSteady}). So, in our framework, measurements of  density  could inform us about the differences in the life-history traits  of viruses with the most similar host range. If two viruses of consecutive rank are very similar (in terms of the aggregate life-history traits $h_i$), then the corresponding host density is low. On the other hand, two very different values of $h_i$ for consecutive viruses imply high host density. The only exception to this analysis is the density of host $n$, which is determined uniquely by the ratio $h_1$ of virus 1 and not by the difference between the ratios of viruses of consecutive rank.

Figure \ref{dens_rank} shows  examples of the connection between the life-history traits of the viruses and the density of the hosts. In the special case where the trade-off curve for the $h_i$ has a curvature with a constant sign, the values of $h_i$ translate into a simple (monotonic) rule for the host densities as a function of rank.  For example, when the trade-off between $h_i$ and viral rank is concave up, the  $H^*_i$ increase with rank (Figure \ref{dens_rank}a). When the trade-off is  concave down, the $H^*_i$ decrease  with rank (Figure \ref{dens_rank}b). In both cases, the density $H^*_n$ of host $n$  can be an exception.
Figure \ref{dens_rank}c  shows the general case. The values of  $h_i$ increase with rank following condition \ref{vTradeoff} for coexistence, but the corresponding equilibrium densities need not be a monotonic function of rank.

\begin{figure}
\centering
\includegraphics{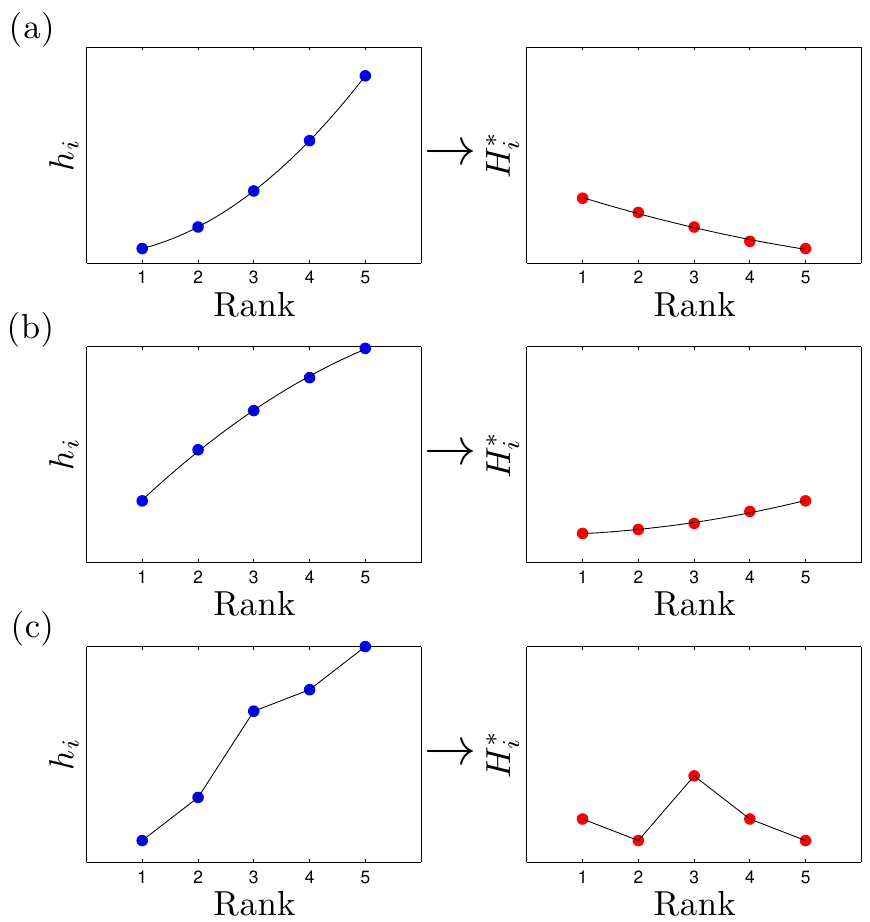}
\caption{Relationship between the viral  tradeoff curve (left) and the corresponding host densities at steady state (right). (a) When the trade-off curve is concave up, host density decreases with rank. (b) When the trade-off curve is concave down, host density increases with rank. (c) In general, the host densities are set by the difference of consecutive $h_i$ and need not  be a  monotonic function of rank.}
\label{dens_rank}
\end{figure}

 The viral densities depend, in part, on how different the host strains are. Specifically, the equilibrium densities of the viruses depend on the differences between growth rates of host strains of consecutive rank. However, they don't depend uniquely on the traits of the host, they also depend on the adsorption rate of the focal virus (eq. \eqref{vSteady}). Thus, information about the viral densities need not translate directly into  information concerning differences in the life-history traits of hosts.

\subsection{Deviation from  a perfectly nested network and coexistence}\label{deviation}

We now return to our assumption about interaction networks and consider systems that are not perfectly nested. The examination of imperfect nestedness is biologically relevant.  For example, a re-analysis of 38 observational studies of phage-host interaction networks found numerous instances of elevated nestedness, all of which included some departures from perfect nestedness \cite{Flores2011}.  The ecological dynamics follow system \eqref{odeSystem}, however the different structures of the infection network will be reflected in the matrix $M$, which includes information of who can infect whom.

The equilibrium densities can be expressed in  a compact form as two matrix equations using  the infection matrix $M$ and its transpose $M^\mathsf{T}$:

\begin{equation}
M^\mathsf{T}\vec{H}^* = \vec{h},\quad \quad
M\vec{V}^{*\prime} = \vec{r}\left(1 - \frac{\sum\limits_{j=1}^nH_j^*}{K }\right)
\label{steadyMatrix}
\end{equation}

Here $\vec{H^*}$ and $\vec{V^*}$ are vectors of the equilibrium densities, $\vec{h}$ is a vector whose elements are the  $h_i$, and $\vec{r}$ is a vector whose elements are the $r_i$. We also use the change of variable $V_i^{* \prime}=\phi_i V_i^*$. We are interested in the solutions of system \eqref{steadyMatrix} that are positive. We consider two cases: the infection matrix is invertible and the infection matrix is singular. An invertible infection matrix can be interpreted biologically as each viral strain having an unique niche, where the niche is defined by the host range. A singular infection network can be interpreted as the existence of niche overlap  between two or more viral strains.

\begin{figure}
\centering
\includegraphics{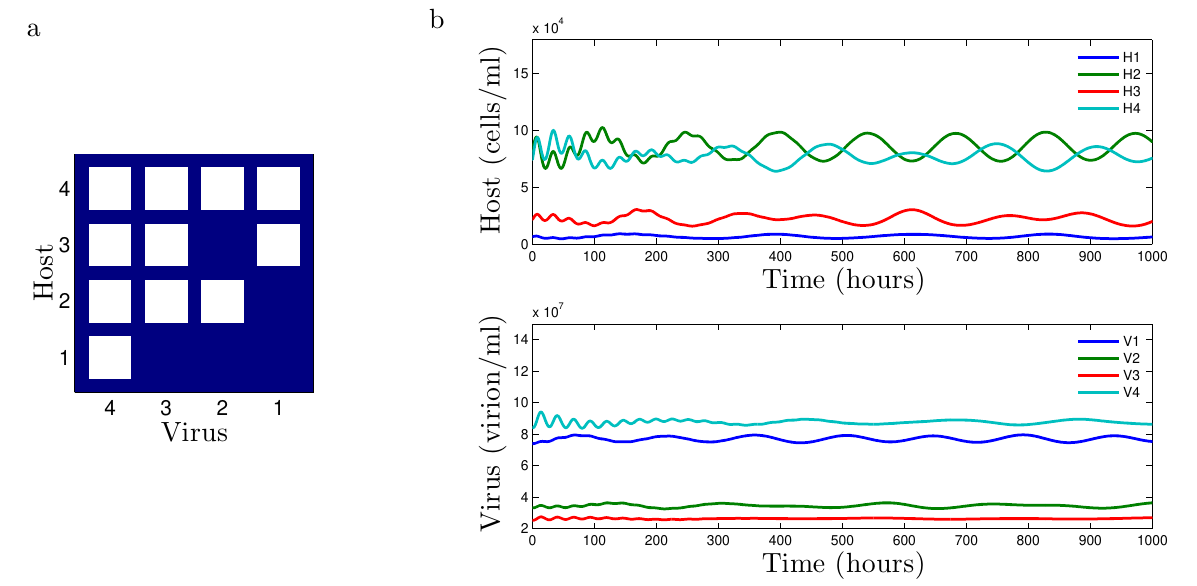}
\caption{ (a) Infection networks corresponding to an invertible infection matrix. (b) Time series from numerical simulation using the infection matrix in (a).}
\label{4by4}
\end{figure}

First consider the case where $M$ is invertible, i.e., there is niche differentiation. In this case, there exists a unique solution to system \eqref{steadyMatrix}, and therefore a unique set of coexistence equilibrium densities. The equilibrium densities  are expressed in terms of the life-history traits of the different viruses and bacteria and will, in general, involve differences between traits or combinations of traits (e.g. $h_i$). Therefore, for every invertible infection network there exists a series of inequalities that are necessary to guarantee positive equilibrium densities (analogous to conditions \eqref{vTradeoff}, \eqref{hTradeoff}, and  \eqref{Kcondition} for the perfectly nested case). In Figure \ref{4by4}a we show an example of an infection network that corresponds to an  invertible infection matrix that is not perfectly nested. We also show time series for a system with that specific  interaction matrix   where all the strains are present in the community (Figure \ref{4by4}b). The relationship between network structure and equilibrium densities may be suitable for further numerical analysis.

If the matrix $M$ is not invertible, i.e. there is niche overlap between viral strains, then there are two ways in which coexistence can occur.  The first case occurs when the life-history traits of some viral strains are effectively equal.  For example, in a system with two viral strains and one host strain, the viral strains can coexist if they have the same $h_i$ values.  Biologically, this case is unlikely to occur given that we are describing strains in terms of function, and in this case, coexistence is possible only when strains are functionally identical.

The second case in which coexistence is possible  occurs if we relax assumptions about the adsorption rate and the burst size for viral strains. In our formulation, a specific viral type infects all host strains with the same adsorption rate and burst size. However, if viruses exploit host strains at different rates, i.e., adsorption rate and burst size depend on both viral and host type, then coexistence is possible even with complete niche overlap. In this case, the steady states are solutions of two analogous systems of equations where the infection matrix $M$ is a weighted matrix whose entries correspond to the adsorption rate and burst size for each host-viral interaction.  A limited number of studies have shown that viral infection rates can differ significantly between host strains (e.g., see \cite{Flores2011}).  Hence, biologically, the use of quantitative information for host-phage infection assays represents an important target for future analysis and may shed light on the drivers of coexistence in natural communities.

\section{Discussion}

We studied the ecological dynamics of phages and hosts using a Lotka-Volterra framework that incorporated complex cross-infection networks. We found that  coexistence is possible even when viruses exploit overlapping ranges of host. In the case of a perfectly nested infection network, Figure \ref{infnets}b, we found  trade-offs for both the hosts and the viral strains that are  necessary to allow the coexistence of all the species.  The trade-off for hosts implies that host growth rate decreases with defense (the larger the growth rate, the larger the viral range). The trade-off for viruses implies that viruses must be less efficient at utilizing host resources as they increase their host range. We also showed that the densities of the host strains at equilibrium are determined by how different the viruses are in terms of their aggregate life-history traits.

The idealized case of a perfectly nested network is  mathematically tractable and  may help to identify potential principles underlying coexistence in host-phage systems. On the other hand, real infection networks are rarely perfectly nested (see Figure 1 for an example, and the re-analysis of Flores et al \cite{Flores2011}). We showed that coexistence is possible in the current framework if  there exists a partitioning of niches. In those cases, the resulting tradeoffs between infection range and life history traits are not easily presented in a general way because each  infection network results in different  trade-off conditions. This, in turn, makes it difficult to study the stability for the general case.  Nevertheless,  we showed numerical  examples of coexistence  away from equilibrium for a infection matrix that is not perfectly nested.

The current model is similar in spirit to the Kill-the-Winner (KTW) model \cite{Thingstad2000}.  The KTW model also proposed mechanisms by which bacteria could coexist, stabilized by the presence of viruses. However, in the KTW model, infections are one-to-one, meaning that each bacterial strain  can only be infected by one virus and likewise, each virus can only infect one bacteria. This would be equivalent to a diagonal infection network in our representation. In the KTW model, coexistence of different types of bacteria is achieved  through what is known as a ``killing the winner" mechanism, where ``coexistence among bacteria is ensured by host-specific viruses that prevent the best bacterial competitors from building up" \cite{Thingstad2000}.  Hence, viruses enable coexistence in a system that would, in their absence, lead to a diversity collapse.  The principle of top-down control also applies to the model presented here, despite the fact that we considered complex interaction networks with the possibility of  overlapping host range.

However, the current model makes predictions not found in the KTW model. First, KTW assumes each
bacterial strain is infected by a single viral type and so an ordering of bacterial growth rates
is not necessary for coexistence. In the current model, an ordering of bacterial growth rates with respect to susceptibility to infection is required for coexistence and represents a prediction of the model. Second, whereas KTW assumes no difference in viral life history traits, we again predict that differences in life
history traits are required for coexistence when viruses differ in their host range. Hence,
altogether we predict that there should be an entanglement between network structure and
bacterial/viral life history traits.

In light of the differences in predictions between our
models and the KTW model, we suggest that  studies of cross-infection from the enviornment should move
beyond qualitative analysis (i.e., whether or not a phage infects a bacteria) to quantitative
analysis (i.e., the lysis rate of bacteria by phages). Doing so would help identify costs of
resistance and infectivity in natural populations as well as help understand the relative
importance of network structure and trade-offs in shaping the structure of interactions in
natural communities.  We note that extensions of the current model will be required to consider the
type of trade-offs that may coincide with findings of imperfectly nestedness infection networks.

In the past, the relationship between infection range and physiological costs has been studied experimentally, often in model organisms. For example, Lenski studied 20 \emph{E. coli} mutant strains resistant to phage T4 \cite{Lenski1988a}. All 20 strains were significantly less fit that the parental  strain as measured via direct competition. Similarly, Bohannan and colleagues found strong growth costs to resistance for \emph{E. coli} strains resistant to phage $\lambda$ and to phage T4 \cite{Bohannan2000}. The hypothesized mechanism is that resistance often requires modification to surface receptors that may alter the rate and effectiveness of nutrient uptake \cite{Breitbart2012}.

However, resistance mechanisms need not depend on surface changes and, moreover, in practice such trade-offs are not found universally. Indeed, two recent studies of putative trade-offs found more equivocal evidence. In one instance, Lennon et al \cite{Lennon2007} studied Synechococcus hosts and associated myoviruses. They considered 22 different bacterial strains selected for viral resistance, based on four different ancestral strains. 11 out of the 22 strains showed a fitness cost compared to its ancestral strain either in  the form of its maximum growth rate from growth curve data or through direct competition assays. Hence, 11 of the 22 strains showed no fitness cost to resistance.  Similarly, Avrani et al \cite{Avrani2011} studied the relationship between Prochlorococcus hosts and associated podoviruses. They found that 11 of 23 mutant strains that had evolved phage resistance grew significantly more slowly than did the ancestral wildtype.  Hence, 12 of the 23 mutants strains did not show a growth cost to resistance.   Instead, they found an alternative type of trade-off in which resistant mutants were  susceptible to infection and lysis at a higher rate by other podoviruses and myoviruses with which they had not coevolved \cite{Avrani2011}. Hence, it may be that the costs of resistance can not be fully understood without taking the community into account, as was considered in the model presented here.

There have also been a limited number of studies linking host range expansion to fitness cost within phages.  For example,  Duffy and colleagues showed that the fitness of evolved RNA-based phages, when grown on the ancestral host, declined significantly in 28 of 30 phages that had evolved an expanded host range \cite{Duffy2006}.  However, 2 of the phages had no significant fitness cost \cite{Duffy2006}.  Given the diversity of hosts and viruses in natural systems, it seems important to extend prior assays relating one-step range expansions to fitness.  In particular, our work and the above suggest it is important to measure both host range and physiological costs in relationship to a broad range of potentially interacting strains.

The integration of life history trait variation into studies of community composition requires taking into account the coevolution of phages and bacteria.   Previous studies have explored the coevolution of phage-host
theoretically \cite{Weitz2005,Roswall2006,Forde2008,Rodriguez2009,Weinberger2012,Childs2012}. Experimental studies of co-evolving phage-host
systems suggest that infection networks are not static, but rather are dynamic and
reflect the changing identities of strains in a population  \cite{Buckling2002,Stern2011,Buckling2012}.  Phage-bacteria
coevolution is often described in terms of an arms race where the species' traits
escalate over time \cite{Stern2011,Breitbart2012}.  In such models it is common for better defended host
types and phage types with better offenses to replace the respective resident type.  However,
coexistence of types is also possible if disruptive selection or evolutionary branching
occurs.  We have shown that ecological coexistence between multiple phage and bacterial types is
possible if a series of trade-offs are satisfied (see Equations \eqref{vTradeoff}-\eqref{Kcondition}).  Thus, if there is an appropriate trade-off
between host defense and growth rate (e.g., resistance to infection) and an appropriate
trade-off between phage replication and host range (e.g., ability to bypass host
defense), evolutionary branching could yield phage-bacteria communities with nested
patterns of infection.  Thus, an arms race between phage and bacteria could result in
the coexistence of one phage and bacteria pair that evolves over time or the coexistence
of multiple phage and bacteria types that lie along their respective trade-offs.

We also point out that coevolution also occurs over spatially extended domains.  Spatial
structure is thought to stabilize diverse interactions amongst phage and bacteria (e.g.
\cite{Kerr2006,Abedon2007,Haerter2012}). Moreover, geographic structure can play a key role in affecting the
outcome of coevolution \cite{thompson2005geographic}. Further work is warranted to quantify how structure in
infection networks are driven by and act as drivers of the spatial distributions of
diverse communities of phages and bacteria \cite{Flores2012,green_2006,angly2006}.

Altogether, the current study predicts that phage-host coexistence in a given system depends on both network
structure (e.g., nestedness) and life history traits.  Other types of infection networks
are possible including modular \cite{Flores2011} and multi-scale \cite{Flores2012}, suggesting the need
for further investigations into the relationship between cross-infection and life history traits.  Extending the current model framework to include evolution, spatial dynamics, and quantitative information on infection will yield increased opportunities for theories of phage-host interactions to deepen our understanding of the mechanisms underlying the formation of complex phage-host interactions networks and their ecosystem-level consequences.

\section{Acknowledgments}
The authors thank two anonymous reviewers for their comments and suggestions.  JSW holds a Career Award at the Scientiﬁc Interface from the Burroughs Wellcome Fund and acknowledges the support of a grant from the James S. McDonnell Foundation and NSF No. OCE-1233760. MHC was partially supported by the National Science Foundation under Award No. DMS-1204401.

\appendix
\setcounter{figure}{0}
\renewcommand\thefigure{\Alph{section}.\arabic{figure}}

\section{Stability of the boundary fixed points for phage-host dynamics with a perfectly nested infection matrix}
\label{ap:stability}
In this section we show that if conditions \eqref{vTradeoff}, \eqref{hTradeoff}, and \eqref{Kcondition} from the main text are satisfied, then all boundary equilibria of the system are unstable. As discussed in the main text, this implies that if the dynamics in the boundary subsystem tend to the boundary equilibrium, then the subsystem can be invaded by at least one of the extinct host or viral strains. If trajectories in the subsystem do not converge to the boundary equilibrium point, then our conclusions about invasability only apply if system \eqref{odeSystem} is permanent (i.e. densities are bounded above and, after some time, are bounded below by a finite value, \cite{Hofbauer1998}). When system \eqref{odeSystem} is permanent, invasion at the equilibrium implies invasion along any orbit (over infinite time).  This is because the average long term invasability conditions along orbits in permanent Lotka-Volterra systems are equal to the invasability conditions at equilibrium points \cite{Hofbauer1998}. We are not aware of any proof that our system is or is not permanent.

\begin{figure}
\centering
\includegraphics{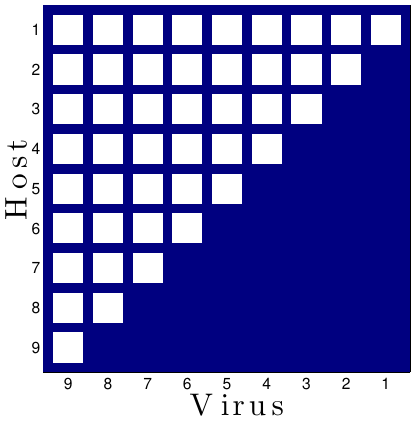}
\caption{ Perfectly nested infection network with the new host numbering used in \ref{ap:stability}.}
\label{infnet_nesA}
\end{figure}

For notational convenience,  we  make two changes to the way  the system is presented. First, we don't use the matrix $M_{ij}$ to establish who can infect who.  Instead we change the range of sums in accord with the nested pattern. Second, we  reverse the order of the hosts ($H_i \rightarrow H_{n-i +1}$). In this numbering, host 1 can be infected by all viral strains, and host $n$ can be infected by a single viral strain (Figure \ref{infnet_nesA}). In this new notation the system takes the form:

\begin{align}
\frac{dH_i}{dt} &=  r_iH_i\left(1 - \frac{\sum\limits_{j=1}^nH_j}{K }\right)  - H_i\sum_{j=i}^n\phi_jV_j, \label{eq:odeSumsH}\\
\frac{dV_i}{dt} &=  \phi_i\beta_iV_i\left(\sum_{j =1}^{i}H_j\right) - m_iV_i.\label{eq:odeSumsV}
\end{align}

In this  notation, conditions \eqref{vTradeoff}, \eqref{hTradeoff}, and \eqref{Kcondition} from the main text become:

\begin{align}
&\quad\frac{m_{i+1}}{\phi_{i+1}\beta_{i+1}}>\frac{m_i}{\phi_i\beta_i}\label{cond_h}\\
&\quad r_{i+1}<r_i\label{cond_r}\\
&\quad\frac{m_n}{\phi_n\beta_n}<K\label{cond_K}.
\end{align}
Note that, as a consequence of reversing the order of the hosts, condition \eqref{cond_r} on the growth rates ($r_i$) is the reverse of condition \eqref{hTradeoff}.

We will show that all possible boundary fixed points are unstable with respect to invasion. Let  $\textbf{X}= \left(H_1,H_2,\dots,H_n,V_1,V_2,\dots,V_n\right)$  denote the  vector of all host and viral densities and let   $\textbf{X}^*=\left(H_1^*,H_2^*,\dots,H_n^*,V_1^*,V_2^*,\dots,V_n^*\right)$  denote a  fixed point of system \eqref{eq:odeSumsH} and \eqref{eq:odeSumsV}, where $H^*_i$ is the equilibrium density of host $i$ and $V^*_i$ is the equilibrium density of virus $i$. We will  denote the Jacobian by $J$ and the eigenvalues by $\lambda$. We will consider 4  properties of fixed points, such that every possible boundary fixed point has at least one (and possibly more) of these properties. We will show that any fixed points with any of these properties is unstable with respect to invasion by at least one of the host or viral strains. The 4 properties of fixed points are:

\begin{enumerate}
\item  $\textbf{X}^*$ where $H_i^*=0\quad\forall\quad i\in [1,n]$.
\item  $\textbf{X}^*$  where $H_1^*=H_2^*=H_3^*=\dots=H_k^*=0$ with $k<n$ and $H_{k+1}^*\ne0$.
\item  $\textbf{X}^*$  where $H_i^* = H_{i+1}^*=\dots=H_n^* = 0$ for $i\neq1$ and $H_{i-1}\neq0$.
\item  $\textbf{X}^*$  where $H_i^* = H_{i+1}^*=\dots=H_k^* = 0$ for $i>1$, $k<n$, and $H_{k+1}\neq0$.
\end{enumerate}

The first property corresponds to fixed points where none of the hosts are present. The second property corresponds  to fixed points where hosts with low defense level are not present (i.e. hosts that can be infected by all or many viral types are not present). The third one corresponds to fixed points where hosts of high defense are not present (i.e. hosts that can be infected by one or only a few viral types are not present). The last property corresponds to fixed points where host of intermediate defense level are not present.

We will prove the instability of fixed points with properties 1,2,3, and 4 in lemmas 3, 4,5 and 6 respectively. Before that we will prove two more lemmas.

\begin{lem}
\label{l:eigenvalue}
 Let $\textbf{X}$ be a fixed point, then:
\begin{enumerate}[a)]
  \item  If $H_i^* = 0$, then  $\frac{\partial \dot{H_i}}{\partial H_i}\big|_{\textbf{X}^*}$ is an eigenvalue of the Jacobian evaluated at $\textbf{X}^*$.
  \item  If $V_i^*=0$, then $\frac{\partial \dot{V_i}}{\partial V_i}\big|_{\textbf{X}^*}$ is an eigenvalue of the Jacobian evaluated at $\textbf{X}^*$.
\end{enumerate}
\end{lem}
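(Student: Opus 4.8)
The plan is to exploit the Lotka--Volterra structure of system \eqref{eq:odeSumsH}--\eqref{eq:odeSumsV}, in which every right-hand side carries an explicit factor of its own density. Writing $\dot{H_i} = H_i\,F_i(\textbf{X})$ with $F_i(\textbf{X}) = r_i\left(1 - \frac{1}{K}\sum_{j=1}^n H_j\right) - \sum_{j=i}^n \phi_j V_j$, and $\dot{V_i} = V_i\,G_i(\textbf{X})$ with $G_i(\textbf{X}) = \phi_i\beta_i\sum_{j=1}^i H_j - m_i$, the key observation is that differentiating $H_i F_i$ by the product rule leaves a factor of $H_i$ on every term \emph{except} the one arising from differentiating the leading $H_i$ itself. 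Evaluating at a fixed point with $H_i^* = 0$ thus kills all of those terms, so the row of the Jacobian corresponding to $\dot{H_i}$ collapses to a single nonzero entry sitting on the diagonal.

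Concretely, for part (a) I would compute the entries of the $\dot{H_i}$ row of $J$ at $\textbf{X}^*$. For any other host index $l\neq i$ one has $\frac{\partial \dot{H_i}}{\partial H_l}\big|_{\textbf{X}^*} = H_i^*\,\frac{\partial F_i}{\partial H_l}\big|_{\textbf{X}^*} = 0$, and for any viral index $l$ one has $\frac{\partial \dot{H_i}}{\partial V_l}\big|_{\textbf{X}^*} = H_i^*\,\frac{\partial F_i}{\partial V_l}\big|_{\textbf{X}^*} = 0$; both vanish because $H_i^* = 0$ and the partials of $F_i$ are finite. The only surviving entry is the diagonal one, $\frac{\partial \dot{H_i}}{\partial H_i}\big|_{\textbf{X}^*} = F_i(\textbf{X}^*) + H_i^*\,\frac{\partial F_i}{\partial H_i}\big|_{\textbf{X}^*} = F_i(\textbf{X}^*)$, again using $H_i^*=0$.

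With the row of $J$ associated to $H_i$ equal to $\frac{\partial \dot{H_i}}{\partial H_i}\big|_{\textbf{X}^*}$ times the standard basis covector $e_i^{\mathsf{T}}$, the eigenvalue claim follows from a standard determinant argument: in $J - \lambda I$ that same row becomes $\left(\frac{\partial \dot{H_i}}{\partial H_i}\big|_{\textbf{X}^*} - \lambda\right) e_i^{\mathsf{T}}$, so Laplace (cofactor) expansion along it factors the characteristic polynomial as $\left(\frac{\partial \dot{H_i}}{\partial H_i}\big|_{\textbf{X}^*} - \lambda\right)$ times the $(i,i)$ minor. Hence $\frac{\partial \dot{H_i}}{\partial H_i}\big|_{\textbf{X}^*}$ is a root of $\det(J - \lambda I)$, i.e. an eigenvalue. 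Part (b) is proved by the identical argument with the roles of $H$ and $V$ interchanged, using $\dot{V_i} = V_i\,G_i(\textbf{X})$ and $V_i^* = 0$ to collapse the $\dot{V_i}$ row to its diagonal entry $\frac{\partial \dot{V_i}}{\partial V_i}\big|_{\textbf{X}^*} = G_i(\textbf{X}^*)$.

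There is no substantial obstacle here beyond careful bookkeeping: the only point to verify is that every off-diagonal entry in the relevant row genuinely carries a factor of the vanishing density, which is immediate from the product rule once the equations are written in the factored form above. The determinant/eigenvalue step is the textbook fact that a matrix possessing a row whose single nonzero entry lies on the diagonal has that entry as an eigenvalue.
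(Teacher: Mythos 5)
Your proposal is correct and follows essentially the same route as the paper's own proof: factor the equation as $\dot{H_i}=H_iF_i(\textbf{X})$, observe that $H_i^*=0$ annihilates every off-diagonal entry of that Jacobian row, and read off the eigenvalue from the resulting factorization of the characteristic polynomial. The only (harmless) addition is your explicit identification of the surviving diagonal entry with $F_i(\textbf{X}^*)$.
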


\begin{proof} a)

Assume $H_i^*=0$. We can  write the equation describing the dynamics of  $H_i$ as:

\begin{equation}
\frac{dH_i}{dt} = H_if(\textbf{X}),
\end{equation}
where  $f(\textbf{X}^*)$ is a function of all the densities. Then, for any density $X\neq H_i$,

\begin{equation}
\frac{\partial\dot{H}_i}{\partial X}\bigg|_{\textbf{X}^*} = \left( H_i\frac{\partial f(\textbf{X})}{\partial X}\right)\bigg|_{\textbf{X}^*}=0.
\end{equation}

\noindent Hence, the Jacobian  will have the form:

\begin{equation}
   J= \begin{pmatrix}
       \frac{\partial \dot{H_1}}{\partial H_1}  & \dots & \frac{\partial \dot{H_1}}{\partial H_n} & \frac{\partial\dot{H_1}}{\partial V_1} & \dots&\frac{\partial \dot{H_1}}{\partial V_n} \\[0.3em]
                   \vdots&   & & & & \vdots\\[0.3em]
       0\quad\dots&0\quad\frac{\partial \dot{H_i}}{\partial H_i}\quad 0 &\dots\quad0 &0&\dots&0\\[0.3em]
         \vdots&   & & & & \vdots\\[0.3em]
         \frac{\partial \dot{V_n}}{\partial H_1}  & \dots & \frac{\partial \dot{V_n}}{\partial H_n} & \frac{\partial\dot{V_n}}{\partial V_1} & \dots&\frac{\partial \dot{V_n}}{\partial V_n} \\[0.3em]
     \end{pmatrix}_{\textbf{X}^*}.
\end{equation}

\noindent Note that all the elements in the $i$th row are zero, except for the diagonal term, which has the form $\frac{\partial \dot{H_i}}{\partial H_i}\big|_{\textbf{X}^*}$. Thus, the characteristic equation can be written as:

\begin{equation}\label{eq:chaequation}
det(J-I\lambda) = \left(\frac{\partial \dot{H_i}}{\partial H_i}\bigg|_{\textbf{X}^*} -\lambda\right)P(\lambda)=0,
\end{equation}

\noindent where $P(\lambda)$ is a polynomial $\lambda$. Hence, $\frac{\partial \dot{H_i}}{\partial H_i}\bigg|_{\textbf{X}^*}$ is an eigenvalue of the Jacobian.

b)  The proof for the case where $V_i^*=0$ follows the same steps as  proof a) for $H_i^*=0$.
\end{proof}

\begin{lem}
\label{l:allVzeros}
Let  $\textbf{X}^*$ be a fixed point where  $V_i^*=0$ for all $i$ and $H_i^* \ne 0$ for at least one $i$, then $\textbf{X}^*$ is unstable with respect to invasion by at least one virus.
\end{lem}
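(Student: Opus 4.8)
The plan is to exploit Lemma~\ref{l:eigenvalue}(b): because every viral density vanishes at such a fixed point, the diagonal entry $\left.\partial \dot{V_i}/\partial V_i\right|_{\mathbf{X}^*}$ is automatically an eigenvalue of the Jacobian for each $i$, so it suffices to exhibit a single index $i$ for which this entry is positive. Differentiating the viral equation \eqref{eq:odeSumsV} directly gives
\[
\left.\frac{\partial \dot{V_i}}{\partial V_i}\right|_{\mathbf{X}^*} = \phi_i\beta_i\sum_{j=1}^{i}H_j^* - m_i ,
\]
so the virus-free state admits invasion by virus $i$ precisely when $\sum_{j=1}^{i}H_j^* > m_i/(\phi_i\beta_i)$.

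First I would pin down the total host biomass at the fixed point. Since $V_i^*=0$ for all $i$, the host equations \eqref{eq:odeSumsH} collapse to pure logistic competition, $\dot{H_i}=r_iH_i\left(1-\sum_{j}H_j/K\right)$. By hypothesis there is at least one index $i$ with $H_i^*\neq 0$; stationarity of that coordinate together with $r_i>0$ forces $1-\sum_{j}H_j^*/K = 0$, i.e.
\[
\sum_{j=1}^{n}H_j^* = K .
\]

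The natural candidate invader is then the most generalist virus, $i=n$, whose growth is driven by the entire host community $\sum_{j=1}^{n}H_j$. Substituting the total biomass just obtained yields the eigenvalue $\phi_n\beta_n K - m_n$, which is positive exactly when $m_n/(\phi_n\beta_n) < K$, i.e. precisely condition \eqref{cond_K}. Hence $\mathbf{X}^*$ possesses a positive eigenvalue in the $V_n$ direction and is unstable with respect to invasion by virus $n$.

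I do not expect a serious obstacle: the argument needs only condition \eqref{cond_K}, not the trade-off inequalities \eqref{cond_h}--\eqref{cond_r}. The one point requiring care is the reduction establishing $\sum_{j}H_j^* = K$, which relies essentially on the assumption that some host is present, so that it is the logistic factor rather than a density that must vanish; this is exactly where the hypothesis $H_i^*\neq 0$ for some $i$ enters, and it is what guarantees that the generalist virus experiences a host pool of total size $K$, which exceeds the threshold $m_n/(\phi_n\beta_n)$ required for its invasion.
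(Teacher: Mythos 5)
Your proof is correct and follows essentially the same route as the paper's: you use the equilibrium condition of the surviving host (with all $V_j^*=0$) to conclude $\sum_j H_j^* = K$, then apply Lemma~\ref{l:eigenvalue}(b) to the generalist virus $n$ and condition \eqref{cond_K} to obtain the positive eigenvalue $\phi_n\beta_n K - m_n$. Your observation that only \eqref{cond_K} is needed, not \eqref{cond_h} or \eqref{cond_r}, matches the paper's argument as well.
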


\begin{proof}
Assume $H_i^*\neq0$. The equilibrium condition for $H_i$ is:

\begin{equation}
r_{i}\left(1 - \frac{\sum H_j^*}{K }\right)  - \cancel{\sum_{j=i}^n\phi_jV_j^*} = 0.
\end{equation}

\noindent So,  $ \sum H_j^* = K $.

Now, $V_n^*=0$, so using lemma \ref{l:eigenvalue} we know that one eigenvalue of the Jacobian is:

\begin{equation}
\label{eq:eigl2}
\lambda = \frac{\partial V_n^*}{\partial V_n}\bigg|_{X^*} = \phi_n\beta_n\sum H_j^* - m_n = \phi_n \beta_n K - m_n > 0,
\end{equation}

\noindent where the inequality results from the condition of coexistence \eqref{cond_K}. Equation \eqref{eq:eigl2} implies that the fixed point $\textbf{X}^*$ is unstable with respect to invasion by virus $n$.
\end{proof}

Now, we will show that fixed points with any of these 4 properties are unstable.

\begin{lem}
\label{l:allzeros}
Let  $\textbf{X}^*$ be a fixed point where $H_i^*=0$ for all $i\in [1,n]$, then $\textbf{X}^*$  is unstable with respect to invasion of any of the hosts.
\end{lem}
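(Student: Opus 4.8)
The plan is to reduce everything to the eigenvalue extracted by Lemma \ref{l:eigenvalue}a, after first pinning down the location of the fixed point. First I would observe that requiring $H_i^*=0$ for every $i$ actually forces all viral densities to vanish as well. Substituting $H_j^*=0$ into the viral equation \eqref{eq:odeSumsV} gives $\dot V_i = \phi_i\beta_i V_i^*(0) - m_i V_i^* = -m_i V_i^*$ at the fixed point, and since $m_i>0$ the equilibrium condition $\dot V_i=0$ yields $V_i^*=0$ for every $i$. Hence the only fixed point with all hosts absent is the origin, $\textbf{X}^*=\textbf{0}$.

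With the coordinates fixed, I would invoke Lemma \ref{l:eigenvalue}a: because $H_i^*=0$, the quantity $\frac{\partial\dot H_i}{\partial H_i}\big|_{\textbf{X}^*}$ is an eigenvalue of the Jacobian, for each $i$. Differentiating the host equation \eqref{eq:odeSumsH} gives
\[
\frac{\partial\dot H_i}{\partial H_i}\bigg|_{\textbf{X}^*}=r_i\left(1-\frac{\sum_{j=1}^n H_j^*}{K}\right)-\frac{r_iH_i^*}{K}-\sum_{j=i}^n\phi_jV_j^*,
\]
and evaluating at the origin, where all $H_j^*=0$ and all $V_j^*=0$, collapses the competition and predation terms and leaves exactly $r_i$. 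Since $r_i>0$ for every host, each of these $n$ eigenvalues is strictly positive.

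Therefore the Jacobian at the origin carries the positive eigenvalues $r_1,\dots,r_n$, one along each host direction, so $\textbf{X}^*$ is unstable with respect to invasion by every host, which is the assertion. I do not anticipate a genuine obstacle here; this is the simplest of the four boundary cases precisely because the vanishing of the logistic competition and the viral loss terms at the origin reduces the host diagonal entries to the bare growth rates. The only step warranting a moment's care is the preliminary observation that "all hosts absent" forces "all viruses absent," since this is what licenses discarding the $\sum_{j=i}^n\phi_jV_j^*$ term and guarantees the eigenvalue is $r_i$ rather than $r_i-\sum_{j=i}^n\phi_jV_j^*$.
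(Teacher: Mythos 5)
Your proposal is correct and follows essentially the same route as the paper's proof: deduce that all viral densities must also vanish at such a fixed point, then apply Lemma \ref{l:eigenvalue}a to read off the eigenvalue $r_i>0$ along each absent host direction. If anything, your derivation of $V_i^*=0$ from the viral equation \eqref{eq:odeSumsV} is slightly more carefully justified than the paper's one-line appeal (which cites the host equation), but the argument is the same.
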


\begin{proof}
Assume $H_i^*=0$ for all types of hosts in the system, then from equation \eqref{eq:odeSumsH} it follows that also $V_i^*=0$ for all types of viruses at equilibrium.

Using  lemma \ref{l:eigenvalue},

\begin{equation}
\lambda = \frac{\partial\dot{H}_i}{\partial H_i}\bigg|_{X^*}=  r_{i}>0
\end{equation}

\noindent is an eigenvalue. Thus $\textbf{X}^*$ is unstable with respect to invasion of any of the hosts.
\end{proof}

\begin{lem}
\label{l:firstzeros}
Let  $\textbf{X}^*$ be a fixed point where  $H_1^*=H_2^*=H_3^*=\dots=H_k^*=0$ with $k<n$ and $H_{k+1}^*\ne0$, then $\textbf{X}^*$  is unstable with respect to invasion of hosts 1 through $k$ .
\end{lem}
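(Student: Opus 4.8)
The plan is to produce a positive eigenvalue of the Jacobian at $\textbf{X}^*$ associated with one of the absent hosts, invoking Lemma \ref{l:eigenvalue}. I would first extract a structural consequence of the hypothesis: since $H_1^*=\dots=H_k^*=0$, the viral equilibrium condition from \eqref{eq:odeSumsV} forces $V_i^*=0$ for every $i\le k$. Indeed, $\phi_i\beta_i V_i^*\sum_{j=1}^i H_j^* = m_i V_i^*$ requires either $V_i^*=0$ or $\sum_{j=1}^i H_j^*=m_i/(\phi_i\beta_i)>0$; but for $i\le k$ the partial sum $\sum_{j=1}^i H_j^*$ vanishes, so $V_i^*=0$. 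This is the observation that makes the rest of the argument collapse.

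Next, for each absent host $i\le k$, Lemma \ref{l:eigenvalue} gives that $\lambda_i=\partial\dot H_i/\partial H_i\big|_{\textbf{X}^*}$ is an eigenvalue, and since $H_i^*=0$ this equals $r_i\big(1-\sum_j H_j^*/K\big)-\sum_{j=i}^n\phi_j V_j^*$. Writing $A=1-\sum_j H_j^*/K$, the vanishing of $V_j^*$ for $j\le k$ truncates the viral sum to $\sum_{j=i}^n\phi_j V_j^*=\sum_{j=k+1}^n\phi_j V_j^*$, a quantity independent of $i$ for $i\le k$. The equilibrium condition for the present host $k+1$ (its bracketed growth factor must vanish because $H_{k+1}^*\ne 0$) reads $r_{k+1}A=\sum_{j=k+1}^n\phi_j V_j^*$. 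Substituting yields the clean identity $\lambda_i=(r_i-r_{k+1})A$ for every $i\le k$.

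It then remains to sign this expression, and here lies the one delicate point I expect to be the main obstacle. Condition \eqref{cond_r} gives $r_i>r_{k+1}$ for all $i\le k$, so the first factor is strictly positive. For $A$, the same host-$(k+1)$ relation $r_{k+1}A=\sum_{j=k+1}^n\phi_j V_j^*\ge 0$ with $r_{k+1}>0$ shows $A\ge 0$, but only $A>0$ yields instability from these host eigenvalues. The degenerate case $A=0$ forces $\sum_{j=k+1}^n\phi_j V_j^*=0$, hence (together with $V_j^*=0$ for $j\le k$) all viruses are absent, which is precisely the hypothesis of Lemma \ref{l:allVzeros}; that lemma already supplies instability via invasion by virus $n$. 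Thus in the remaining case $A>0$ we obtain $\lambda_i=(r_i-r_{k+1})A>0$ for every $i\le k$, so $\textbf{X}^*$ is unstable with respect to invasion by any of hosts $1$ through $k$, completing the argument.
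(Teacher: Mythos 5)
Your proposal is correct and follows essentially the same route as the paper's own proof: deduce $V_1^*=\dots=V_k^*=0$ from the viral equilibrium conditions, apply Lemma \ref{l:eigenvalue} to each absent host, eliminate the viral sum via the equilibrium condition for the present host $k+1$, and obtain $\lambda_i=(r_i-r_{k+1})\left(1-\sum_j H_j^*/K\right)>0$ using condition \eqref{cond_r}. Your handling of the degenerate case (all viruses absent, deferred to Lemma \ref{l:allVzeros}) matches the paper's case split as well.
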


\begin{proof}
Assume $H_1^*=H_2^*=H_3^*=\dots=H_k^*=0$ with $k<n$. The equilibrium condition for $V_i$ for  $1\leq i \leq k$ is:

\begin{align}
\left(\phi_i\beta_i\sum_{j=1}^i H_j - m_i\right)V_i = 0.
\end{align}

\noindent Note that $V_1,V_2,...,V_k$ only infect hosts 1 through $k$ (the upper limit of the sum is $i$ and $1\leq i \leq k$  ). Thus if  $H_i^*=0$  for the first $k$ viruses, then it must be the case that $V_i^*=0$  for the first $k$ viruses, i.e

\begin{equation}
\label{eq:firstVzeros}
V_1^*=V_2^*=V_3^*=\dots=V_k^*=0.
\end{equation}

\noindent From lemma \ref{l:eigenvalue} the eigenvalues for the first $k$ hosts have the form :

\begin{equation}\label{eq:eigcase3}
\lambda = \frac{\partial\dot{H}_i}{\partial H_i}\bigg|_{X^*}=  r_{i}\left(1 - \frac{\sum H_j^*}{K }\right)  - \sum_{j=k+1}^n\phi_jV_j^*.
\end{equation}

\noindent Note the lower limit in the sum as a result of equation \eqref{eq:firstVzeros}.

Now, assume  $H_{k+1}\neq 0 $. From the equilibrium condition for $H_{k+1}$ we get:

\begin{equation}\label{eq:case3}
\sum_{j=k+1}^n\phi_jV_j^* = r_{k+1}\left(1 - \frac{\sum H_j^*}{K }\right).
\end{equation}

\noindent If $V_i^*=0$ for all $i$ in the sum, then $V_i^*=0$ for all the viruses in the system. This case is already covered in lemma \ref{l:allVzeros}.

On the other hand if $V_i^*\neq0$ for at least one virus $V_i^*$ we get,

\begin{equation}\label{eq:condPos}
\left(1 - \frac{\sum H_j^*}{K}\right)>0.
\end{equation}

\noindent Plugging the result \eqref{eq:case3} in  equation \eqref{eq:eigcase3}, we get:

\begin{equation}\label{eq:eigl4}
\lambda = (r_i- r_{k+1})\left(1 - \frac{\sum H_j^*}{K }\right)>0,
\end{equation}

\noindent where the inequality comes from the inequality \eqref{eq:condPos} and  the condition for coexistence \eqref{cond_r}. Equation \eqref{eq:eigl4} implies that the fixed point $\textbf{X}^*$ is unstable with respect to invasion by host $i$ where $1\leq i\leq k$.

\end{proof}

\begin{lem}
\label{l:lastzeros}
Let  $\textbf{X}^*$ be a fixed point where $H_i^* = H_{i+1}^*=\dots=H_n^* = 0$ for $i\neq1$ and $H_{i-1}^*\neq0$, then $\textbf{X}^*$  is unstable with respect to invasion of at least one species of host or virus.
\end{lem}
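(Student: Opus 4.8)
Lemma 6 concerns boundary fixed points where the hosts of highest defense level (hosts $i, i+1, \ldots, n$) are absent, while host $i-1$ is present, with $i \neq 1$. The plan is to mirror the structure of the proof of Lemma 5 (property 2), but working from the top of the defense hierarchy rather than the bottom. First I would extract the consequences of the absent hosts on the viral equilibria. In this numbering, virus $j$ requires hosts $1$ through $j$ to persist (its equilibrium condition is $\phi_j\beta_j\sum_{\ell=1}^j H_\ell^* = m_j$). Since the \emph{generalist} viruses with large rank need the high-defense hosts that are now absent, I expect the key observation to be a constraint forcing certain viral densities to vanish, analogous to equation \eqref{eq:firstVzeros}.

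**The main argument.** I would then use Lemma \ref{l:eigenvalue} to read off a diagonal eigenvalue from one of the absent hosts. The natural candidate is host $n$ (the most defended, certainly among the absent set). Its associated eigenvalue is
\begin{equation}
\lambda = \frac{\partial \dot{H}_n}{\partial H_n}\bigg|_{\textbf{X}^*} = r_n\left(1 - \frac{\sum H_j^*}{K}\right) - \phi_n V_n^*.
\end{equation}
Because host $n$ is absent, virus $n$ (which only persists if $\sum_{j=1}^n H_j^* = h_n$) cannot be supported, so I expect $V_n^* = 0$, collapsing the eigenvalue to $r_n\left(1 - \sum H_j^*/K\right)$. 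The crux is then to show $\sum H_j^* < K$, which makes this eigenvalue positive and yields instability with respect to invasion by host $n$. To establish $\sum H_j^* < K$, I would invoke the equilibrium condition for host $i-1$, which is present: either the viral pressure term on it is zero — reducing to the logistic steady state $\sum H_j^* = K$, a degenerate case to handle separately — or some virus infecting host $i-1$ is present, forcing $1 - \sum H_j^*/K > 0$ exactly as in inequality \eqref{eq:condPos}.

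**Handling the cases and the obstacle.** The case split is where care is needed. If all viruses infecting the surviving hosts are absent, the system reduces to one already covered by Lemma \ref{l:allVzeros} (all $V_i^* = 0$ with some $H_i^* \neq 0$), so instability follows immediately. Otherwise some relevant virus is present, giving $\sum H_j^* < K$ strictly, and the host-$n$ eigenvalue above is positive. I anticipate the main obstacle is verifying cleanly that $V_n^* = 0$ under property 3: one must argue that no virus whose host-range requirement includes an absent host can itself be present, then confirm that the largest-rank present virus cannot be virus $n$. This follows because virus $n$'s persistence demands $\sum_{j=1}^n H_j^* = m_n/(\phi_n\beta_n)$, which requires host $n$ (and all others) to be present — contradicting the hypothesis. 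Once $V_n^* = 0$ is secured, the remaining steps are the same positivity-of-eigenvalue computation used throughout the appendix, so the proof should close without additional machinery.
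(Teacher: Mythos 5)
There is a genuine gap in your argument, and it sits exactly where you flagged your ``main obstacle'': the claim that $V_n^*=0$ must hold at every such boundary fixed point is false. In the reversed numbering, virus $n$'s equilibrium condition is $\phi_n\beta_n\sum_{j=1}^{n}H_j^*=m_n$, and this is a constraint on the \emph{total} host density, not a requirement that each of hosts $1,\dots,n$ be individually present. With hosts $i,\dots,n$ absent the sum collapses to $\sum_{j=1}^{i-1}H_j^*$, which can perfectly well equal $m_n/(\phi_n\beta_n)$; the generalist virus can persist on the surviving low-defense hosts alone. So there is no contradiction with the hypothesis, and your reduction of the eigenvalue for host $n$ to $r_n\left(1-\sum H_j^*/K\right)$ does not go through in general: when $V_n^*\neq 0$ that eigenvalue is $r_n\left(1-\sum H_j^*/K\right)-\phi_nV_n^*$, whose sign is not controlled by the trade-off conditions.

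The paper closes this hole by first noting that the equilibrium conditions for viruses $i-1$ through $n$ all involve the same sum $\sum_{j=1}^{i-1}H_j^*$, so at most one of $V_{i-1}^*,\dots,V_n^*$ can be nonzero, and then splitting into three cases. Your Case 1 (all these viruses absent, reduce to Lemma \ref{l:allVzeros}) and the case where the surviving virus is some $V_k^*$ with $k\in[i-1,n-1]$ (so $V_n^*=0$ and host $n$ invades, which is your main computation) match the paper's Case 1 and Case 3. What you are missing is the paper's Case 2, $V_n^*\neq 0$: there the instability is established not through host $n$ but through invasion by one of the absent viruses $k\in[i-1,n-1]$, whose Lemma~\ref{l:eigenvalue} eigenvalue is $\phi_k\beta_k\sum_{j=1}^{i-1}H_j^*-m_k=\phi_k\beta_k\,m_n/(\phi_n\beta_n)-m_k>0$ by the viral trade-off condition \eqref{cond_h}. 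Without an argument of this type your proof does not cover all fixed points with property 3.
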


\begin{proof}
Assume $H_i^* = H_{i+1}^*=\dots=H_n^* = 0$. The equilibrium  condition for  viruses $i-1$ through $n$  are:

\begin{equation}
\label{eq:equiVlast}
\begin{aligned}
\left( \sum^{i-1}_{j=1}H^*_j -  \frac{m_{i-1}}{\phi_{i-1}\beta_{i-1}}\right)V_{i-1}^* &= 0\\
\left( \sum^{i-1}_{j=1}H^*_j -  \frac{m_{i}}{\phi_{i}\beta_{i}}\right)V_{i}^* &= 0\\
\vdots\\
\left( \sum^{i-1}_{j=1}H^*_j -  \frac{m_{n}}{\phi_{n}\beta_{n}}\right)V_{n}^* &= 0.
\end{aligned}
\end{equation}

\noindent Note that all the sums in \eqref{eq:equiVlast} are the same. Because $\frac{m_\ell}{\phi_\ell\beta_\ell}\neq\frac{m_p}{\phi_p\beta_p}$ for all $\ell\neq p$ in the system, the equations in \eqref{eq:equiVlast} can only be satisfied if $V_k^*\neq0$ for at most one $k\in[i-1,n]$.

To see that $\textbf{X}^*$ is unstable, consider three possible cases:

\begin{enumerate}[{Case} 1:]
    \item  $\textbf{X}^*$  where $V_k^*=0 \quad\forall\quad k \in [i-1,n]$. \label{case:last1}
    \item  $\textbf{X}^*$  where $V_n^*\neq0$ and  $V_k^*=0 \quad\forall\quad k \in [i-1,n-1]$. \label{case:last2}
    \item  $\textbf{X}^*$  where $V_k^*\neq0$ for one $k \in [i-1,n-1]$ and $V_\ell=0$ for  $\ell \in [i-1,n]$, $k \neq \ell$. \label{case:last3}
\end{enumerate}

Case \ref{case:last1}:

Assume $V_k^*=0$ for all $k \in [i-1,n]$. In this case $V_k^*=0$ for all $k \in [1,n]$. Indeed, the equilibrium condition for $H_{i-1}$ implies:

\begin{equation}
 r_{i-1}\left(1 - \frac{\sum H_j^*}{K }\right) -\cancel{\sum_{j=i-1}^n\phi_jV_j^*} =0.
\end{equation}

\noindent Thus,

\begin{equation}
\label{eq:condllast}
 \sum H_j^* =K.
\end{equation}

\noindent From equation \eqref{eq:odeSumsH} we see that equation \eqref{eq:condllast} implies $V_i^*=0$ for all $i$. This is  the case covered in lemma \eqref{l:allVzeros}.

Case \ref{case:last2}:

Assume $V_n^*\neq0$ and  $V_k^*=0$ for all  $k \in [i-1,n-1]$. From lemma 1 we know that, because $V_k=0$ for $i-1\leq k<n$ the corresponding eigenvalues have the form:

\begin{equation}\label{eq:eiglast}
\lambda = \frac{\partial V_k^*}{\partial V_k}\bigg|_{X^*} = \phi_k\beta_k\sum_{j=1}^{i-1} H_j^* - m_k.
\end{equation}

\noindent From the equilibrium condition for $V_n$, we get:

\begin{equation}\label{eq:Vequilast}
 \sum_{j=1}^{i-1} H_j^* = \frac{m_n}{\phi_n\beta_n}.
\end{equation}

\noindent Note that the limits of the sums in \eqref{eq:eiglast} and \eqref{eq:Vequilast} are the same as a result of the $H_i^*$ that are zero, substituting the last  expression into  \eqref{eq:eiglast}:

\begin{equation}
\lambda = \phi_{k}\beta_{k}\frac{m_n}{\phi_n\beta_n} - m_{k} >0,
\end{equation}

\noindent where the inequality results from \eqref{cond_h}, and consequently the  steady state $\textbf{X}^*$ is unstable with respect to invasion of any of the viruses $i-1$ through $n-1$.

Case \ref{case:last3}:

$V_k^*\neq0$ for one $k \in [i-1,n-1]$ and $V_\ell=0$ for  $\ell \in [i-1,n]$, $k \neq \ell$. The only virus capable of infecting host $n$  is virus $n$.  Using lemma 1 we get  that one eigenvalue is given by:

\begin{equation}\label{eq:eiglemma5case2}
\lambda = \frac{\partial\dot{H}_n}{\partial H_n}\bigg|_{X^*}=r_{n}\left(1 - \frac{\sum H_j^*}{K }\right),
\end{equation}

\noindent where we used  $V_n^*=0$. At this fixed point the only virus infecting host $i-1$  is  virus $k$, so the equilibrium condition becomes:

\begin{equation}
r_{i-1}\left(1 - \frac{\sum H_j^*}{K }\right) -\phi_kV_k^* = 0.
\end{equation}

\noindent Thus,

\begin{equation}
\left(1 - \frac{\sum H_j^*}{K }\right)>0
\end{equation}

\noindent and

\begin{equation}
\lambda = r_{n}\left(1 - \frac{\sum H_j^*}{K }\right)>0.
\end{equation}

\noindent So, the fixed point $\textbf{X}^*$   is unstable with respect to invasion by host $n$.
\end{proof}

\begin{lem}
\label{l:middlezeros}
Let  $\textbf{X}^*$ be a fixed point where  $H_i^* = H_{i+1}^*=\dots=H_k^* = 0$ for $i\neq1$, $k<n$, and $H_{k+1}\neq0$, then $\textbf{X}^*$  is unstable with respect to invasion of at least one species of host or virus.
\end{lem}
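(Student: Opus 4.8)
The plan is to mirror the structure of the proof of Lemma~\ref{l:lastzeros}: I would exhibit a single positive eigenvalue of the Jacobian at $\textbf{X}^*$ by invoking Lemma~\ref{l:eigenvalue} for a suitably chosen absent host or virus, after first determining which viruses can survive on the ``middle block'' of absent hosts. The structural fact that drives everything is that, because $H_i^*=\dots=H_k^*=0$, the partial host sums are frozen across the block: for every $m$ with $i-1\le m\le k$ one has $\sum_{j=1}^m H_j^* = \sum_{j=1}^{i-1}H_j^* =: S$. I would record this first, since it is used repeatedly.

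Next I would pin down the surviving block viruses. For each $j\in[i,k]$ the equilibrium condition from \eqref{eq:odeSumsV} reads $(\phi_j\beta_j S - m_j)V_j^*=0$, i.e. $V_j^*=0$ unless $S=h_j$. Since the ratios $h_j=m_j/(\phi_j\beta_j)$ are pairwise distinct by \eqref{cond_h}, at most one index $p\in[i,k]$ can satisfy $S=h_p$; hence at most one of $V_i^*,\dots,V_k^*$ is nonzero. I would also note, using the equilibrium of the present host $H_{k+1}^*\ne0$, that $r_{k+1}\bigl(1-\tfrac{\sum H_j^*}{K}\bigr)=\sum_{j=k+1}^n\phi_jV_j^*\ge0$, so that $W:=1-\tfrac{\sum H_j^*}{K}\ge0$, with $W=0$ forcing $V_{k+1}^*=\dots=V_n^*=0$.

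The argument then splits on the block viruses. In Case A, all of $V_i^*,\dots,V_k^*$ vanish; then the loss sum for host $i$ collapses to the upper viruses, and invoking Lemma~\ref{l:eigenvalue} for host $i$ together with the $H_{k+1}$ equilibrium gives $\lambda_i=(r_i-r_{k+1})W$. Since $r_i>r_{k+1}$ by \eqref{cond_r}, this is positive whenever $W>0$, so host $i$ invades; and if $W=0$, then $\sum H_j^*=K$ and $V_n^*=0$, so Lemma~\ref{l:eigenvalue} applied to virus $n$ yields $\lambda=\phi_n\beta_nK-m_n>0$ by \eqref{cond_K}, exactly as in the first two cases of Lemma~\ref{l:lastzeros}. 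In Case B, exactly one block virus $V_p^*\ne0$, so $S=h_p$; here the naive host-$i$ eigenvalue acquires a predation loss $-\phi_pV_p^*$ of indeterminate sign, so instead I would invade with virus $p-1$ (which exists because $i\ge2$ forces $p-1\ge1$). By the frozen-sum identity virus $p-1$ sees host density $\sum_{j=1}^{p-1}H_j^*=S=h_p$, and $V_{p-1}^*=0$, either because $p-1$ lies in the block (leaving $V_p$ as the unique surviving block virus) or, when $p=i$, because $V_{i-1}^*\ne0$ would force the contradiction $S=h_{i-1}\ne h_i=S$. Lemma~\ref{l:eigenvalue} then gives $\lambda=\phi_{p-1}\beta_{p-1}(S-h_{p-1})=\phi_{p-1}\beta_{p-1}(h_p-h_{p-1})>0$ by \eqref{cond_h}.

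I expect Case B to be the main obstacle. The clean ``compare growth rates'' argument that handles the all-viruses-absent situation breaks down once a virus persists on the middle block, because that virus contributes an unsigned predation loss to every absent host in the block. The key realization is that the surviving virus $V_p$ rigidly fixes the partial host sum at $h_p$, which simultaneously forces the neighboring lower virus $V_{p-1}$ to be extinct and makes its invasion fitness equal to the strictly positive gap $h_p-h_{p-1}$; choosing the invader to be a virus rather than a host is precisely what makes the signs work out.
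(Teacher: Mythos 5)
Your proof is correct and follows essentially the same route as the paper's: a case split on which of the viruses confined to the absent block of hosts can persist, with Lemma~\ref{l:eigenvalue} supplying a positive eigenvalue either from a growth-rate comparison against the present host $H_{k+1}$ (via condition~\eqref{cond_r}) or from the gap $h_p-h_{p-1}$ between a surviving block virus and its absent lower-ranked neighbor (via condition~\eqref{cond_h}). If anything you are slightly more careful, since you explicitly dispose of the degenerate subcase $\sum H_j^* = K$ by invading with virus $n$ using condition~\eqref{cond_K}, a point the paper's Case~1 leaves implicit.
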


Proof:

Similarly to lemma \ref{l:lastzeros} the equilibrium condition for  viruses $i-1$ through $k$ are:

\begin{align}\label{eq:lemma6}
\left( \sum^{i-1}_{j=1}H^*_j -  \frac{m_{i-1}}{\phi_{i-1}\beta_{i-1}}\right)V_{i-1}^* &= 0\\\nonumber
\left( \sum^{i-1}_{j=1}H^*_j -  \frac{m_{i}}{\phi_{i}\beta_{i}}\right)V_{i}^* &= 0\\\nonumber
\vdots\\\nonumber
\left( \sum^{i-1}_{j=1}H^*_j -  \frac{m_{k}}{\phi_{k}\beta_{k}}\right)V_{k}^* &= 0.
\end{align}

To simultaneously satisfy all of the equations at most one $V_\ell^* \neq 0$ for $\ell \in [i-1,k]$ . To show that $\textbf{X}^*$  is unstable,  it is convenient to  study  two different cases.

\begin{enumerate}[{Case} 1:]
    \item  $\textbf{X}^*$  where $V_k^*=0 $ \label{case:middle1}
    \item  $\textbf{X}^*$  where $V_k^*\neq0$  \label{case:middle2}.
\end{enumerate}

Case \ref{case:middle1}:

Assume $V_k=0$. Using lemma 1 we know that an eigenvalue of the Jacobian is:

\begin{equation}\label{eq:lemma6eig}
\lambda=\frac{\dot{H_k}}{H_k}= r_{k}\left(1 - \frac{\sum H_j^*}{K }\right) - \sum_{j=k+1}^n\phi_jV_j^*.
\end{equation}
The equilibrium condition for $H_{k+1}$ yields:

\begin{equation}
r_{k+1}\left(1 - \frac{\sum H_j^*}{K }\right) - \sum_{j=k+1}^n\phi_jV_j^*=0.
\end{equation}
Using this result the eigenvalue becomes:

\begin{equation}
\lambda = (r_{k} - r_{k+1})\left(1 - \frac{\sum H_j^*}{K }\right).
\end{equation}
And as a consequence of condition \eqref{cond_h}, the eigenvalue is positive and the fixed point is unstable with respect to invasion by host $k$.

Case \ref{case:middle2}:

Assume $V_k\neq0$. The equilibrium condition for $V_k$ is:

\begin{equation}\label{eq:case2eig}
\phi_k\beta_k\sum_{j=1}^{i-1}H_j^* - m_k =0.
\end{equation}

\noindent So,

\begin{equation}
\sum_{j=1}^{i-1}H_j^*=\frac{m_k}{\phi_k\beta_k}.
\end{equation}

\noindent From  Lemma 1  follows that for any $i \leq \ell<k$, there exists an eigenvalue of the form:

\begin{equation}
\lambda = \frac{\partial V_\ell^*}{\partial V_\ell}\bigg|_{X^*} = \phi_\ell\beta_\ell\sum_{j=1}^{i-1} H_j^* - m_\ell.
\end{equation}

\noindent Substituting equation \eqref{eq:case2eig}, we get:
\begin{equation}
\lambda =  \frac{\partial V_\ell^*}{\partial V_\ell}\bigg|_{X^*} = \phi_\ell\beta_\ell\frac{m_k}{\phi_k\beta_k}- m_\ell > 0,
\end{equation}

\noindent where the inequality is a consequence of condition \eqref{cond_h}. Thus, the fixed point  is unstable with respect to invasion by  viruses $i-1$ through $k-1$ .

\newpage

\section{Parameters values used in the numerical simulations}
\label{ap:para}

\begin{table}[h]
\begin{center}
\scriptsize
\begin{tabular}{c|cccccc}
\hline
     &      &  r  (h$^{-1}$)     &  K (cells/ml)     &  m (h$^{-1}$)     &  $\phi$ (ml/(cells$\times$h))&  $\beta$ (No. of viruses)\\ \hline
Fig.\ref{2h2vStable}&$H_1$ &  0.1760  & $10^7$ &     &          &         \\ \hline
                    &$H_2$ &  0.9914  & $10^7$ &     &          &         \\ \hline
                    &$V_1$ &          &        & 0.0681 &  $5\times10^{-9}$ & 24 \\ \hline
                    &$V_2$ &          &        & 0.1047 &  $5\times10^{-9}$ & 10  \\  \hline
  \hline
Fig.\ref{2h2vDie}&$H_1$ &   1.0500 & $10^8$ &     &          &         \\ \hline
                 &$H_2$ &   0.300 & $10^8$ &     &          &         \\ \hline
                 &$V_1$ &          &        & 0.0100 &  $5\times10^{-9}$ & 100 \\ \hline
                 &$V_2$ &          &        & 0.0650 &  $5\times10^{-9}$ & 100 \\ \hline
\hline
Fig.\ref{2h2vOsc}&$H_1$ &   0.3000 & $10^8$ &     &          &         \\ \hline
                &$H_2$ &   1.0500 & $10^8$ &     &          &         \\ \hline
                &$V_1$ &          &        & 0.0100 &  $5\times10^{-9}$ & 100 \\ \hline
                &$V_2$ &          &        & 0.0650 &  $5\times10^{-9}$ & 100 \\  \hline
\hline
Fig.\ref{5h5v}&$H_1$ &   0.3162 & $10^7$ &     &          &         \\ \hline
     &$H_2$ &   0.6325 & $10^7$ &     &          &         \\ \hline
     &$H_3$ &   0.8367 & $10^7$ &     &          &         \\ \hline
     &$H_4$ &   1.1000 & $10^7$ &     &          &         \\ \hline
     &$H_5$ &   1.1400 & $10^7$ &     &          &         \\ \hline
     &$V_1$ &          &        & 0.0090 &  $5\times10^{-9}$ & 20 \\ \hline
     &$V_2$ &          &        & 0.0270 &  $5\times10^{-9}$ & 40 \\ \hline
     &$V_3$ &          &        & 0.0548 &  $5\times10^{-9}$ & 60 \\ \hline
     &$V_4$ &          &        & 0.0922 &  $5\times10^{-9}$ & 80 \\ \hline
     &$V_5$ &          &        & 0.1392 &  $5\times10^{-9}$ & 100 \\ \hline
\hline
Fig.\ref{4by4}b&$H_1$ &   0.4472& $10^7$ &     &          &         \\ \hline
     &$H_2$ &   0.7583 & $10^7$ &     &          &         \\ \hline
     &$H_3$ &   0.9747 & $10^7$ &     &          &         \\ \hline
     &$H_4$ &   1.1511 & $10^7$ &     &          &         \\ \hline
     &$V_1$ &          &        & 0.0100&  $5\times10^{-9}$  & 20 \\ \hline
     &$V_2$ &          &        & 0.0375 &  $5\times10^{-9}$ & 46 \\ \hline
     &$V_3$ &          &        & 0.0650 &  $5\times10^{-9}$ & 70 \\ \hline
     &$V_4$ &          &        & 0.0925 &  $5\times10^{-9}$ & 96 \\ \hline
     \hline
\end{tabular}
\end{center}
\caption{Values of the life-history traits used in the numerical simulations. Values based on \cite{Wommack2000,Abedon2007}}
\label{table:para}
\end{table}

\newpage

\bibliographystyle{elsarticle-num}

\end{document}